\documentclass[runningheads]{llncs}

\usepackage{times}
\usepackage{amsmath}
\usepackage{amssymb}
\usepackage{diagbox}
\usepackage{url}
\usepackage{graphicx}

\newtheorem{observation}[theorem]{Observation}
 
\title{Fair Sharing: The Shapley Value for Ride-Sharing and Routing Games}

 \author{Chaya Levinger\inst{1} \and
 Noam Hazon\inst{1} \and
 Amos Azaria\inst{1}}
 \authorrunning{Levinger, Hazon and Azaria}
 %
 \institute{Department of Computer Science, Ariel University, Israel
 \email{chaya.levinger@msmail.ariel.ac.il}\\
 \email{\{noamh,amos.azaria\}@ariel.ac.il}}

\begin{document}

\maketitle

\begin{abstract}
Ride-sharing services are gaining popularity and are crucial for a sustainable environment. A special case in which such services are most applicable, is the last mile variant. In this variant it is assumed that all the passengers are positioned at the same origin location (e.g. an airport), and each have a different destination. 
One of the major issues in a shared ride is fairly splitting of the ride cost among the passengers.

In this paper we use the Shapley value, which is one of the most significant solution concepts in cooperative game theory, for fairly splitting the cost of a shared ride.
We consider two scenarios. In the first scenario there exists a fixed priority order in which the passengers are dropped-off (e.g. elderly, injured etc.), and we show a method for efficient computation of the Shapley value in this setting. Our results are also applicable for efficient computation of the Shapley value in routing games. 

In the second scenario there is no predetermined priority order. We show that the Shapley value cannot be efficiently computed in this setting. However, extensive simulations reveal that our approach for the first scenario can serve as an excellent proxy for the second scenario, outperforming other known proxies.  

\end{abstract}

\section{Introduction}

On-demand ride-sharing services, which group together passengers with similar itineraries, can be of significant social and environmental benefit, by reducing travel costs, road congestion and $CO_2$ emissions. 
Indeed, the National Household Travel Survey performed in the U.S. in 2009 \cite{santos2011summary} revealed that approximately 83.4\% of all trips in the U.S. were in a private vehicle (other options being public transportation, walking, etc.). The average vehicle occupancy was only $1.67$ when compensating for the number of passengers. 
The deployment of autonomous cars in the near future is likely to increase the spread for ride-sharing services, since it will be easier and cheaper for a company to handle a fleet of autonomous cars that can serve the demands of different passengers.

Most works in the domain of ride-sharing are dedicated to the assignment of passengers to vehicles, or to planning optimal drop-off routes~\cite{psaraftis2016dynamic,Alonso-Mora462,molenbruch2017}.
In this paper we study a fair allocation of the cost of the shared ride in the last mile variant~\cite{cheng2014mechanisms}. That is, we analyze the cost allocation when all passengers are positioned at the same origin location. 
We concentrate on the Shapley value~\cite{shapley1953value} as our notion of fair cost allocation. The Shapley value is widely used in cooperative games, and is the only cost allocation satisfying efficiency, symmetry, null player property and additivity. 
The Shapley value has been even termed the most important normative division scheme in cooperative game theory \cite{winter2002shapley}. However, the Shapley value depends on the travel cost of a ride of each subset of the passengers.
Therefore, as stated by {\"O}zener and Ergun ~\cite{ozener2008allocating}, ``In general, explicitly calculating the Shapley value requires exponential time. Hence, it is an impractical cost-allocation method unless an implicit technique given the particular structure of the game can be found''.

There are two possible general structures of the last-mile ride-sharing problem. In some cases there is a priority order in which the passengers are dropped-off.
Such prioritization may be attributed to the order in which the passengers arrived at the origin location, or the frequency of passenger usage of the service; the latter is similar to the different boarding groups on an aircraft. Other rationales for prioritization may include urgency of arrival or priority groups in need (e.g. elderly, disabled, pregnant women, and the injured). Clearly, in such cases, the prioritization is preserved when determining the travel cost of a ride with a subset of the passengers.
We denote this problem as the \emph{prioritized ride-sharing problem}.
Indeed, in some scenarios there is no predetermined prioritization order. In such cases it is assumed that a ride with a subset of the passengers is performed using the shortest (or cheapest) path that traverses their destinations. We denote this problem as the \emph{non-prioritized ride-sharing problem}.

Our problems of cost allocation are closely related to traveling salesman games~\cite{potters1992traveling}. In these games, a service provider makes a round-trip along the locations of several sponsors, where the total cost of the trip should be distributed among the sponsors. 
Specifically, the prioritized ride-sharing problem is similar to the fixed-route traveling salesman game, also known as routing game~\cite{yengin2012characterizing}, while the non-prioritized ride-sharing problem is similar to the traveling salesman game.  
Most of the works on traveling salesman games concentrated on finding an element of the core, a solution game concept which is different from the Shapley value. One exception is the work of Yengin~\cite{yengin2012characterizing}, who has studied the Shapley value of routing games and has conjectured that there is no efficient way for computing the Shapley value in routing games. 

In this paper, we show an efficient computation of the Shapley value for the prioritized ride-sharing problem. Our method is based on smart enumeration of the components that are used in the computation of the Shapley value. Furthermore, our approach can be generalized to routing games, and we thus also provide an efficient way for computing the Shapley value in routing game. We then move to analyze the non-prioritized ride-sharing problem and show that, unless P=NP, there is no polynomial time algorithm for computing the Shapley value. Fortunately, we show through extensive simulations that when the given travel path is the shortest path the Shapley value of the prioritized ride-sharing problem can be used as an excellent proxy for the Shapley value of the non-prioritized ride-sharing problem.


We note that the term ride-sharing is used in the literature with different meanings. We consider only the setting where the vehicle operator does not have any preferences or predefined destination. Instead, the vehicle's route is determined solely by the passengers' requests.
In addition, the context of our work is that the assignment of the passengers to the vehicle has already been determined, either by a ride-sharing system or by the passengers themselves, and we only need to decide on the cost allocation. 
Since we focus on the case where the assignment has already been determined, we do not consider the ability of passengers to deviate from the given assignment and join a different vehicle, which is acceptable since either they want to travel together or no other alternative exists. 

To summarize, the contributions of this paper are two-fold:
\begin{enumerate}
    \item We show an efficient method for computing the Shapley value of each user in a shared-ride when the priority order is predetermined. 
Our solution entails that the Shapley value can be computed in polynomial time in routing games as well, which is in contrast to a previous conjecture made.
\item We show that, while there exist no polynomial algorithm for computing the Shapley value of the non-prioritized ride-sharing problem (unless P=NP), the Shapley value of the prioritized ride-sharing problem can be used as an excellent proxy for the Shapley value of the non-prioritized ride-sharing problem (under the assumption that the provided travel path is the shortest path).
\end{enumerate}

\section{Related Work}
The ride-sharing cost allocation problems that we study are closely related to traveling salesman games~\cite{potters1992traveling}. 
Specifically, the prioritized ride-sharing problem is similar to the fixed-route traveling salesman game~\cite{fishburn1983fixed,potters1992traveling,besozzi2014traveling}, also known as routing game~\cite{yengin2012characterizing}.

One variant of routing game is the fixed-route traveling salesman problems with appointments. In this variant the service provider is assumed to travel back home (to the origin) when she skips a sponsor. This variant was introduced by Yengin~\cite{yengin2012characterizing}, who also showed how to efficiently compute the Shapley value for this problem but stated that his technique does not carry over to routing games.

The prioritized ride-sharing problem can also be interpreted as a generalization of the airport problem \cite{littlechild1973simple} to a two dimensional plane.
In the airport problem we need to decide how to distribute the cost of an airport runway among different airlines who need runways of different lengths.
In our case we distribute the cost among passengers who need rides of different lengths and destinations.
It was shown that the Shapley value can be efficiently computed for the airport problem, however achieving efficient computation of the Shapley value in our setting requires a different technique.

The Shapley value for the traveling salesman game, which is related to our non-prioritized ride-sharing problem, has rarely received serious attention in the literature, due to its computational complexity. Notably, Aziz et al.~\cite{aziz2016study} suggested a number of direct and sampling-based procedures for calculating the Shapley value for the traveling salesman game. They further surveyed several proxies for the Shapley value that are relatively easy to compute, and experimentally evaluate their performance. We develop a proxy for the Shapley value for the non-prioritized ride-sharing problem which is based on the Shapley-value for the prioritized ride-sharing problem, and compare its performance with proxies that are based on the work of Aziz et al.

The problem of fair cost allocation was also studied in the context of logistic operation. In this domain, shippers collaborate and bundle their shipment requests together to achieve better rates from a carrier \cite{guajardo2016review}. The Shapley value was also investigated in this domain of collaborative transportation~\cite{frisk2010cost,sun2015transportation}.
In particular, {\"O}zener and Ergun~\cite{ozener2008allocating} stated that ``we do not know of an efficient technique for calculating the Shapley value for the shippers' collaboration game''. Indeed, Fiestras-Janeiro et al.~\cite{Fiestras-Janeiro2012} developed the line rule, which is inspired by the Shapley value, but requires less computational effort and relates better with the core. However, the line rule is suitable for a specific inventory transportation problem. 
{\"O}zener~\cite{ozener2014developing} describes an approximation of the Shapley Value when trying to simultaneously allocate both the transportation costs and the emissions among the customers. Overall, we note that the main requirements from a cost allocation in the context of logistic operation is stability, and an equal distribution of the profit, since the collaboration is assumed to be long-termed. The type of interaction is our setting is inherently different, as it is an ad-hoc short term collaboration.

%
%

In another domain, Bistaffa et al.~\cite{bistaffa2015recommending} introduce a fair payment scheme, which is based on the game theoretic concept of the kernel, for the social ride-sharing problem (where the set of commuters are connected through a social network).

\section{Preliminaries}
We are given a weighted graph $G(V,E)$ that represents a road network; $V$ is the set of possible locations, and $E$ is a set of weighted edges that represents the set of roads. 
We are given a set $U = \{u_1, u_2, ..., u_n\}$ of passengers (users) that depart from the same origin, $d_0 \in V$. Without loss of generality, we assume that passenger $u_1$ will be dropped-off first, passenger $u_2$ will be dropped off second, etc. Each passenger $u_i$ has a corresponding destination $d_i \in V$. Let $D \subset V$ be the set of destinations, $D = \{d_1, d_2, ..., d_n\}$. 
We denote by $\delta(d_i,d_j)$ the shortest travel distance between $d_i$ and $d_j$ in $G$ and $\delta(d_i,d_i)=0$.
To simplify the notation we define a dummy destination, $d_{n+1}$, such that for every $i \in \{0,1,...,n\}$, $\delta(d_i,d_{n+1})=0$.
Given a set $S \subseteq D$, let $c(S)$ be the cost associated with the subset $S$. That is, $c(D)$ is the total travel cost of the shared ride. We note that $c(S)$, where $S \subsetneq D$, depends on the order in which the passengers are dropped off, and therefore $c(S)$ is defined differently in the prioritized ride-sharing problem and in the non-prioritized ride-sharing problem. The Shapley value for a passenger $u_i$ is formally defined as:
\[
\phi(u_i)=\sum_{S \subseteq U \setminus \{i\}} \frac{|S|!(|U|-|S|-1)!}{|U|!} \big(c(S \cup \{i\}) - c(S)\big).
\]
That is, the Shapley value is an average over the marginal costs of each passenger. 

\section{The Prioritized Ride-sharing Problem}
\label{sec:shapley}
In this section we assume that the passengers are ordered according to some predetermined priority order, and efficiently compute the payment for every passenger using the Shapley value. 
Unlike other related work \cite{potters1992traveling}, we do not require that the priority order will be the optimal order that minimizes the total cost.

\subsection{Notation}
Given the set of passengers $U$, without loss of generality, we assume that passenger $u_1$ has the highest priority, passenger $u_2$ has the second highest priority, etc. 
%
%
%
Given a set $S \subseteq D$, let $\tilde{S}$ be the set $S$ ordered in an ascending order (according to the priority order), and let $S[i]$ be the destination that is in the $i$-th position in $\tilde{S}$. For ease of notation we use $S[0]$ to denote $d_0$ and $S[|S|+1]$ to denote $d_{n+1}$.

Given a set $S \subseteq D$, let $v(S)$ be the shortest travel distance of the path that starts at the origin $d_0$ and traverses all destinations ${d_i} \in S$ according to an ascending order. That is, $v(S) = \sum\limits_{i=0}^{|S|-1} \delta(S[i],S[i+1])$. This value ($v(S)$) serves as the cost associated with a subset of passengers, $c(S)$, in the computation of the Shapley value.

Let $R$ be a permutation on $D$ and let $P^{R}_{i}$ be the set of the previous destinations to $d_i$ in permutation $R$.

\subsection{Efficient Computation of the Shapley Value}
We are interested in determining the payment for each passenger, $u_i$, according to the Shapely value, $\phi(u_i)$. The Shapley value has several equivalent formulas, and we use the following formula to derive an efficient computation in the prioritized ride-sharing problem: 
\[
\phi(u_i)={\frac {1}{n!}}\sum _{R}\bigg(v(P_{i}^{R}\cup \left\{d_i\right\})-v(P_{i}^{R})\bigg).
\]

Given a permutation $R$ and a passenger $u_i$, let $d_l \in P_{i}^{R}$ be a destination such that $l < i$ and $\forall d_j \in  P_{i}^{R}, j \leq l$ or $i < j$.
If no such destination exists, then $d_l$ is defined as $d_0$.
Similarly, let $d_r \in P_{i}^{R}$ be a destination such that $i < r$ and $\forall d_j \in  P_{i}^{R}, j < i$ or $r \leq j$.
If no such destination exists, then $d_r$ is defined as $d_{n+1}$.
We use $\ell$ (and \textit{r}) to denote the position of $d_l$ (and $d_r$) in the ordered $\tilde{P_{i}^{R}}$, respectively. If $d_l = d_0$ then $\ell=0$, and if $d_r=d_{n+1}$ then \textit{r}$=|P^{R}_{i}|+1$. We note that $P_{i}^{R}[\ell] = d_l$, $P_{i}^{R}[\textit{r}] = d_r$ and $\textit{r} = \ell+1$.

For example, assume $D = \{d_1, d_2, d_3, d_4, d_5, d_6\}$ and $R= \{d_6, d_2, d_5, d_4, d_3, d_1\}$, we get $P_{4}^{R} = \{d_6, d_2, d_5\}$ and thus $\tilde{P_{4}^{R}} = \{d_2, d_5, d_6\}$, $d_l = d_2$ (i.e., $\ell = 1$), $d_r=d_5$ (i.e., $\textit{r}=2$), and $P_{4}^{R}[\ell] = d_2$.


Our first observation is that the computation of the Shapley value in our setting, $\phi(u_i)$, may be written as the sum over the distances between pairs of destinations.
\begin{observation}
\label{obs:shap_sums}
$\phi(u_i) = \frac{1}{n!} \sum\limits_{p=0}^{n-1}\sum\limits_{q=p+1}^{n} \alpha_{p,q}^i\delta(d_p,d_q)$, for some $\alpha_{p,q}^i \in \mathbb{Z}$.
\end{observation}
\begin{proof}
We note that $\phi(u_i) \cdot n!$ is a sum over $v(S)$ for multiple $S \subseteq D$. By definition, $v(S) = \sum\limits_{j=0}^{|S|-1} \delta(S[j],S[j+1])$, such that $S[j] = d_p$ and $S[j+1] = d_q$ where $p < q$.
\qed
\end{proof}

We now show that we can rewrite the computation of the Shapely value in our setting as follows.
\begin{lemma}
\label{lem:shap_to_sum}
\[    \phi(u_i)={\frac {1}{n!}}\sum\limits _{R}\bigg( \delta(d_l, d_i) + \delta(d_i, d_r) - \delta(d_l, d_r)\bigg)
\]
\end{lemma}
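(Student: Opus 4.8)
The plan is to prove the identity term by term: I would fix a single permutation $R$ and passenger $u_i$ and show that the per-permutation marginal contribution satisfies
\[
v(P_{i}^{R}\cup \{d_i\}) - v(P_{i}^{R}) = \delta(d_l, d_i) + \delta(d_i, d_r) - \delta(d_l, d_r).
\]
Since both the starting formula for $\phi(u_i)$ and the target expression are of the form $\frac{1}{n!}\sum_{R}(\cdots)$, establishing this equality for each fixed $R$ and then summing over all $R$ immediately yields the lemma.

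First I would unfold the definition $v(S) = \sum_{j=0}^{|S|-1}\delta(S[j],S[j+1])$ for both $S = P_{i}^{R}$ and $S = P_{i}^{R}\cup\{d_i\}$. The key structural fact is that inserting $d_i$ into $P_{i}^{R}$ respects the ascending priority order, so $d_i$ occupies exactly one position of $\widetilde{P_{i}^{R}\cup\{d_i\}}$. By the definitions of $d_l$ and $d_r$ together with the stated adjacency $\textit{r} = \ell+1$ (so that $d_l$ and $d_r$ are consecutive in $\tilde{P_{i}^{R}}$), the destination $d_i$ is inserted precisely between $d_l$ and $d_r$. Consequently every segment of the ordered path other than the one running from $d_l$ to $d_r$ is identical in the two sets, so all of those terms cancel in the difference. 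What remains is the single broken edge $\delta(d_l,d_r)$ and the two new edges $\delta(d_l,d_i)$ and $\delta(d_i,d_r)$, which is exactly the claimed local telescoping.

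The main obstacle — indeed the only delicate part — is verifying the boundary cases rather than the generic interior one. I would handle separately (or absorb uniformly via the dummy destinations) the situations where $d_i$ is inserted at the front ($d_l = d_0$, i.e.\ $\ell = 0$), at the back ($d_r = d_{n+1}$, i.e.\ $\textit{r} = |P_{i}^{R}|+1$), and when $P_{i}^{R} = \emptyset$ (both at once). These are exactly the cases the preliminaries introduce $d_0$ and $d_{n+1}$ to smooth over: since $\delta(d_j,d_{n+1}) = 0$ for all $j$, an insertion at the back contributes $\delta(d_l,d_i) + 0 - 0 = \delta(d_l,d_i)$, matching the fact that no path segment follows the last destination, while $d_0 = S[0]$ already serves as the path's starting anchor, so an insertion at the front correctly contributes $\delta(d_0,d_i) + \delta(d_i,d_r) - \delta(d_0,d_r)$. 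I expect the generic case to follow by direct cancellation once the unchanged prefix and suffix of the path are identified, with the entire argument hinging on the adjacency $\textit{r} = \ell+1$, which guarantees that adding $d_i$ displaces exactly one edge of the original path.
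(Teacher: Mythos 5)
Your proposal is correct and follows essentially the same route as the paper's proof: fix a permutation $R$, expand $v(P_{i}^{R})$ and $v(P_{i}^{R}\cup\{d_i\})$ via the definition of $v$, observe that the prefix and suffix segments cancel because $d_i$ is inserted exactly between the consecutive elements $d_l$ and $d_r$ (using $\textit{r}=\ell+1$), leaving $\delta(d_l,d_i)+\delta(d_i,d_r)-\delta(d_l,d_r)$ per permutation. Your explicit treatment of the boundary cases ($\ell=0$, $\textit{r}=|P_{i}^{R}|+1$, $P_{i}^{R}=\emptyset$) is slightly more careful than the paper, which handles them implicitly through the conventions $S[0]=d_0$, $S[|S|+1]=d_{n+1}$, and $\delta(\cdot,d_{n+1})=0$.
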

\begin{proof}

\[v(P^{R}_{i})=\sum\limits_{j=0}^{|P^{R}_{i}|-1}\delta(P^{R}_{i}[j],P^{R}_{i}[j+1])=\] 
\[\sum\limits_{j=0}^{\ell-1}\delta(P^{R}_{i}[j],P^{R}_{i}[j+1])+\delta(d_l,d_r)+\sum\limits_{j=\textit{r}}^{|P^{R}_{i}|-1}\delta(P^{R}_{i}[j],P^{R}_{i}[j+1])\]

In addition, \[v(P^{R}_{i}\cup \left\{d_i\right\})=\sum\limits_{j=0}^{\ell-1}\delta(P^{R}_{i}[j],P^{R}_{i}[j+1])+\]\[\delta(d_l,d_i)+\delta(d_i,d_r)+\sum\limits_{j=\textit{r}}^{|P^{R}_{i}|-1}\delta(P^{R}_{i}[j],P^{R}_{i}[j+1]).\]

By definition, \[ \phi(u_i)={\frac {1}{n!}}\sum _{R}\left[v(P^{R}_{i}\cup \left\{d_i\right\})-v(P^{R}_{i})\right]=\]

${\frac {1}{n!}}\sum\limits _{R}\bigg( \sum\limits_{j=0}^{\ell-1}\delta(P^{R}_{i}[j], P^{R}_{i}[j+1]) + \delta(d_l, d_i) + \delta(d_i, d_r) + \sum\limits_{j=\textit{r}}^{|P^{R}_{i}|-1}\delta(P^{R}_{i}[j], P^{R}_{i}[j+1]) - \Big(\sum\limits_{j=0}^{\ell-1}\delta(P^{R}_{i}[j], P^{R}_{i}[j+1]) + \delta(d_l, d_r) + \sum\limits_{j=\textit{r}}^{|P^{R}_{i}|-1}\delta(P^{R}_{i}[j], P^{R}_{i}[j+1])\Big) \bigg) =$

${\frac {1}{n!}}\sum\limits _{R}\Big(\delta(d_l, d_i) + \delta(d_i, d_r) - \delta(d_l, d_r)\Big)$
\qed
\end{proof}

Following Observation~\ref{obs:shap_sums} and Lemma~\ref{lem:shap_to_sum} we now show that we can rewrite the computation of the Shapely value as a sum over distances, that can be computed in polynomial time.
\begin{theorem}
\label{thr:poly}
For each $i$, 
$\phi(u_i) = \sum\limits_{p=0}^{i}\sum\limits_{q=i}^{n} \beta _{p,q}^i\delta(d_p,d_q)$, 
where $q\neq p$, and $\beta_{p,q}^i \in 
\mathbb{Q}$ are computed in polynomial time.
\end{theorem}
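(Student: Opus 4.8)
The plan is to start from Lemma~\ref{lem:shap_to_sum}, which already expresses $\phi(u_i)$ as $\frac{1}{n!}\sum_R(\delta(d_l,d_i)+\delta(d_i,d_r)-\delta(d_l,d_r))$, and to show that after summing over all permutations every distance that survives has the claimed form $\delta(d_p,d_q)$ with $p\le i\le q$. By construction $d_l$ has index $l<i$ and $d_r$ has index $r>i$, so $d_l\in\{d_0,\dots,d_{i-1}\}$ and $d_r\in\{d_{i+1},\dots,d_{n+1}\}$. Using the conventions $\delta(d_i,d_{n+1})=0$ and $\delta(d_l,d_{n+1})=0$, the boundary terms with $d_r=d_{n+1}$ vanish, so every nonzero distance is $\delta(d_p,d_i)$ with $p<i$, $\delta(d_i,d_q)$ with $q>i$, or $\delta(d_p,d_q)$ with $p<i<q$. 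Each fits the index window $0\le p\le i\le q\le n$ with $p\neq q$, which already establishes the structural form; it then remains to compute $\beta_{p,q}^i$, i.e.\ $\frac{1}{n!}$ times the signed number of permutations producing that distance.

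The key step is to replace the sum over permutations $R$ by a sum over the \emph{predecessor set} $T=P_i^R$. The neighbors $d_l$ and $d_r$ depend only on $T$ (namely $d_l$ is the largest-indexed element of $T$ below $i$, or $d_0$, and $d_r$ the smallest-indexed element above $i$, or $d_{n+1}$), and there are exactly $|T|!\,(n-1-|T|)!$ permutations with a given predecessor set $T$. Hence each coefficient is a sum of terms $|T|!\,(n-1-|T|)!/n!$ over the subsets $T$ that yield the relevant neighbor(s). Because the three distance types never collide---$\delta(d_p,d_i)$ only arises from the $+\delta(d_l,d_i)$ term, $\delta(d_i,d_q)$ only from $+\delta(d_i,d_r)$, and the interior $\delta(d_p,d_q)$ only from $-\delta(d_l,d_r)$---no cancellation across types needs to be tracked, and each $\beta_{p,q}^i$ is governed by a single neighbor condition.

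First I would translate each neighbor condition into a partition of the remaining destinations into \emph{forced-in}, \emph{forced-out}, and \emph{free} elements. For instance, $d_l=d_p$ (with $1\le p\le i-1$) forces $d_p\in T$ and $d_{p+1},\dots,d_{i-1}\notin T$, while $d_1,\dots,d_{p-1}$ and $d_{i+1},\dots,d_n$ are free; writing $f$ for the number of free elements and $k$ for how many of them lie in $T$ gives $|T|=1+k$ and
\[
n!\,\beta_{p,i}^i=\sum_{k=0}^{f}\binom{f}{k}(k+1)!\,\bigl(n-2-k\bigr)!,
\]
a sum of at most $n$ factorial products, hence polynomial-time computable; the interior term $\delta(d_p,d_q)$ forces two elements into $T$ and yields an analogous sum with the opposite sign. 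The main obstacle is the careful bookkeeping of the forced/free partition for each of the three term types, together with the boundary cases $p=0$ (so $d_l=d_0$, with no forced-in element) and $q=n+1$ (which contributes nothing), and checking that the resulting expressions collapse to simple closed forms such as $\beta_{p,i}^i=\tfrac{1}{(i-p)(i-p+1)}$. Once the partition is pinned down, polynomial-time computability of each $\beta_{p,q}^i\in\mathbb{Q}$ is immediate, so the substantive difficulty is entirely combinatorial rather than analytic.
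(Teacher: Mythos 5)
Your proposal is correct, and its skeleton matches the paper's proof --- both start from Lemma~\ref{lem:shap_to_sum}, note that the surviving distances split into non-colliding types according to which of $d_l=d_p$, $d_r=d_q$, or both jointly, produced them (with the $d_{n+1}$ boundary terms vanishing), and then reduce everything to counting permutations per neighbor condition --- but your counting step takes a genuinely different route. The paper counts directly in closed form: for each pair it chooses the positions of the contiguous index block (e.g.\ $d_p,\dots,d_i$), orders the block subject to the constraint of who comes first and second, and orders the remaining destinations freely, immediately obtaining $\beta_{0,i}^i=\frac{1}{i}$, $\beta_{p,i}^i=\frac{1}{(i-p)(i-p+1)}$, $\beta_{i,q}^i=\frac{1}{(q-i)(q-i+1)}$, $\beta_{0,q}^i=-\frac{1}{q(q-1)}$, and $\beta_{p,q}^i=-\frac{2}{(q-p-1)(q-p)(q-p+1)}$. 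You instead condition on the predecessor set $T=P_i^R$, correctly weight each $T$ by $|T|!\,(n-1-|T|)!$, and encode each neighbor condition as a forced-in/forced-out/free partition, so that every coefficient becomes an explicit sum of at most $n+1$ binomial-factorial terms. This makes the theorem as stated immediate --- each of the $O(n^2)$ coefficients is a rational number given by an $O(n)$-term sum, so membership in $\mathbb{Q}$ and polynomial-time computability require no further identity --- whereas the paper's route additionally delivers the clean constants (which it reuses, e.g., in the routing-game corollary, where $\beta_{p,n+1}^i$ is stated in closed form). The two counts agree exactly: your sums collapse to the paper's constants via
\[
\sum_{k=0}^{f}\binom{f}{k}(k+a)!\,(b-k)!=\frac{(a+b+1)!\,a!\,(b-f)!}{(a+b-f+1)!},
\]
where $a$ is the number of forced-in elements ($a=0$ for $\beta_{0,i}^i$; $a=1$ for $\beta_{p,i}^i$, $\beta_{i,q}^i$, and $\beta_{0,q}^i$; $a=2$ for $\beta_{p,q}^i$ with $0<p<i<q$), $b=n-1-a$, and $f$ is the number of free elements; e.g.\ for $\beta_{p,i}^i$ one gets $f=n-i+p-1$ and the sum equals $n!/\bigl((i-p)(i-p+1)\bigr)$, exactly the paper's count.
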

\begin{proof}
By definition, $l < i < r$.
According to Lemma~\ref{lem:shap_to_sum}  $\phi(u_i) \cdot n!$ is a sum over $\delta(d_p,d_q)$, where $p \leq i \leq q$.
There are several terms in this sum:

\begin{itemize}

\item $\beta_{0,i}^i$ multiplies $\delta(d_0,d_i)$. Now, 
$\delta(d_0,d_i)$ appears in $\phi(u_i)$ in every permutation $R$ when $d_l=d_0$. That is, in all of the permutations where destination $d_i$ appears before any other destination $d_x$ such that $x<i$. We now count the number of such permutations. There are $\binom{n}{i}$ options to place the destinations $d_1,d_2,...,d_i$ among the $n$ available destinations. For each such option there are $(i-1)!$ options to order the destinations $d_1,d_2,...,d_i$ such that $d_i$ is the first destination among them. Finally, there are $(n-i)!$ options to order the destinations $d_{i+1},d_{i+2},...,d_n$. Therefore, $\delta(d_0,d_i)$ appears in $\binom{n}{i} \cdot (i-1)! \cdot (n-i)! = \frac{n!}{i}$ permutations, and by inserting $\frac{1}{n!}$ into the sum we get that $\beta_{0,i}^i = \frac{1}{i}$.


\item For each $q>i$, $\beta_{0,q}^i$ multiplies $\delta(d_0,d_q)$. Now, 
$\delta(d_0,d_q)$ appears negatively in $\phi(u_i)$ in every permutation $R$ when $d_l=d_0$ and $d_r=d_q$. That is, in all of the permutations where destination $d_q$ appears before $d_i$ (i.e., $d_q \in P^R_i$), but any other destination $d_x$ such that $x<q$, appears after $d_i$. We now count the number of such permutations. There are $\binom{n}{q}$ options to place the destinations $d_1,d_2,...,d_i,...,d_q$ among the $n$ available destinations. For each such option there are $(q-2)!$ options to order the destinations $d_1,d_2,...,d_i$ such that $d_q$ is the first destination and $d_i$ is the second destination among them. Finally, there are $(n-q)!$ options to order the destinations $d_{q+1},d_{q+2},...,d_n$. Therefore, $\delta(d_0,d_q)$ appears negatively in $\binom{n}{q} \cdot (q-2)! \cdot (n-q)! = \frac{n!}{q\cdot (q-1)}$ permutations, and by inserting $\frac{1}{n!}$ into the sum we get that $\beta_{0,q}^i = -\frac{1}{q\cdot (q-1)}$.


\item For each $0<p<i$, $\beta_{p,i}^i$ multiplies $\delta(d_p,d_i)$. Now, 
$\delta(d_p,d_i)$ appears in $\phi(u_i)$ in every permutation $R$ when $d_l=d_p$. That is, in all of the permutations where destination $d_p$ appears before $d_i$ (i.e., $d_p \in P^R_i$), but any other destination $d_x$ such that $p<x<i$, appears after $d_i$. We now count the number of such permutations. There are $\binom{n}{i-p+1}$ options to place the destinations $d_p,d_{p+1},...,d_i$ among the $n$ available destinations. For each such option there are $(i-p+1-2)!$ options to order the destinations $d_p,d_{p+1},...,d_i$ such that $d_p$ is the first destination and $d_i$ is the second destination among them. Finally, there are $(n-(i-p+1))!$ options to order the destinations $d_1, d_2,...,d_{p-1},d_{i+1},d_{i+2},...,d_n$. Therefore, $\delta(d_p,d_i)$ appears in $\binom{n}{i-p+1} \cdot (i-p-1)! \cdot (n-(i-p+1))! = \frac{n!}{(i-p)\cdot (i-p+1)}$ permutations, and by inserting $\frac{1}{n!}$ into the sum we get that $\beta_{p,i}^i = \frac{1}{(i-p)\cdot (i-p+1)}$.


\item For each $q>i$, $\beta_{i,q}^i$ multiplies $\delta(d_i,d_q)$. Now, 
$\delta(d_i,d_q)$ appears in $\phi(u_i)$ in every permutation $R$ when $d_r=d_q$. That is, in all of the permutations where destination $d_q$ appears before $d_i$ (i.e., $d_q \in P^R_i$), but any other destination $d_x$ such that $i<x<q$, appears after $d_i$. We now count the number of such permutations. There are $\binom{n}{q-i+1}$ options to place the destinations $d_i,d_{i+1},...,d_q$ among the $n$ available destinations. For each such option there are $(q-i+1-2)!$ options to order the destinations $d_i,d_{i+1},...,d_q$ such that $d_q$ is the first destination and $d_i$ is the second destination among them. Finally, there are $(n-(q-i+1))!$ options to order the destinations $d_1, d_2,...,d_{i-1},d_{q+1},d_{q+2},...,d_n$. Therefore, $\delta(d_p,d_i)$ appears in $\binom{n}{q-i+1} \cdot (q-i-1)! \cdot (n-(q-i+1))! = \frac{n!}{(q-i)\cdot (q-i+1)}$ permutations, and by inserting $\frac{1}{n!}$ into the sum we get that $\beta_{i,q}^i = \frac{1}{(q-i)\cdot (q-i+1)}$.


\item For each $p,q$ such that $p<i<q$, $\beta_{p,q}^i$ multiplies $\delta(d_p,d_q)$. Now, 
$\delta(d_p,d_q)$ appears negatively in $\phi(u_i)$ in every permutation $R$ when $d_l=d_p$ and $d_r=d_q$. That is, in all of the permutations where destinations $d_p, d_q$ appear before $d_i$ (i.e., $d_p,d_q \in P^R_i$), but any other destination $d_x$ such that $p<x<q, x\neq i$, appears after $d_i$. We now count the number of such permutations. There are $\binom{n}{q-p+1}$ options to place the destinations $d_p,d_{p+1},...,d_i,...,d_q$ among the $n$ available destinations. For each such option there are $(q-p+1-3)!$ options to order the destinations $d_p,d_{p+1},...,d_i,...,d_q$ such that $d_p$ is the first destination, $d_q$ is the second and $d_i$ is the third destination among them. Similarly, there are $(q-p+1-3)!$ options to order these destinations such that $d_q$ is the first destination, $d_p$ is the second and $d_i$ is the third. Finally, there are $(n-(q-p+1))!$ options to order the destinations $d_1, d_2,...,d_{p-1},d_{q+1},d_{q+2},...,d_n$. Therefore, $\delta(d_p,d_q)$ appears in $\binom{n}{q-p+1} \cdot 2 \cdot (q-p-2)! \cdot (n-(q-p+1))! =\frac{2 \cdot n!}{(q-p-1)\cdot (q-p) \cdot (q-p+1)}$ permutations, and by inserting $\frac{1}{n!}$ into the sum we get that $\beta_{p,q}^i = -\frac{2}{(q-p-1)\cdot (q-p) \cdot (q-p+1)}$.
%
\end{itemize}
\qed
\end{proof}

We note that the prioritized ride-sharing problem is very similar to the setting of routing games~\cite{potters1992traveling}. The model of routing games is of one service provider that makes a round-trip along the locations of several sponsors in a fixed order, where the total cost of the trip should be distributed among the sponsors. Clearly, our problem is almost identical: the service provider corresponds to the vehicle and the sponsors correspond to the passengers. The only difference is that in a routing game the sponsors also pay the cost of the trip back to the origin. Indeed, the results presented in this section carry over to routing games.
\begin{theorem}
The Shapley value in routing games can be computed in polynomial time.
\end{theorem}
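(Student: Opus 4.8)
The plan is to exploit the additivity axiom of the Shapley value together with the near-identity between routing games and the prioritized ride-sharing problem. As the text observes, the only difference is that a routing game additionally charges the return leg from the last-visited destination back to the origin. Concretely, for every $S \subseteq D$ the routing-game cost decomposes as $v(S) + w(S)$, where $w(S) = \delta(S[|S|], d_0)$ is the cost of the return trip from the lowest-priority (highest-index) destination in $S$ back to $d_0$, with $w(\emptyset)=0$. Since the Shapley value is additive in the cost function, the routing-game Shapley value of each passenger equals $\phi(u_i)$ from Theorem~\ref{thr:poly} plus the Shapley value of that passenger with respect to $w$ alone. As $\phi(u_i)$ is already polynomial-time computable, it remains only to establish the same for the Shapley value of $w$.

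First I would apply the permutation formula to $w$ and note that $u_i$ has a nonzero marginal contribution in a permutation $R$ exactly when $d_i$ becomes the new last destination of $P_i^R \cup \{d_i\}$, i.e.\ precisely when no $d_x$ with $x>i$ lies in $P_i^R$. In each such permutation the return leg changes from $\delta(d_m, d_0)$ to $\delta(d_i, d_0)$, where $d_m$ is the highest-index destination already in $P_i^R$ (the outgoing term being $0$ when $P_i^R$ is empty). Hence, in analogy with Observation~\ref{obs:shap_sums}, the Shapley value of $u_i$ with respect to $w$ can be written as a signed sum of terms $\delta(d_p, d_0)$.

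Next I would fix the multiplicities by the same counting style used in Theorem~\ref{thr:poly}. The positive term $\delta(d_i, d_0)$ arises in every permutation in which $d_i$ precedes all of $d_{i+1}, \dots, d_n$, which happens in $\tfrac{n!}{n-i+1}$ permutations. For each $m<i$, the negative term $\delta(d_m, d_0)$ arises precisely when $d_i$ still precedes $d_{i+1}, \dots, d_n$ while $d_m$ is the top-index element before $d_i$ (equivalently, $d_m$ precedes $d_i$ and $d_{m+1}, \dots, d_{i-1}$ all follow it), which happens in $\tfrac{n!}{(n-m)(n-m+1)}$ permutations. Dividing by $n!$ gives rational coefficients, each evaluable in constant time, so the entire Shapley value of $w$, and therefore the routing-game Shapley value, is obtained in polynomial time.

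The main obstacle will be the combinatorial bookkeeping of this last step: one must check that the events ``$d_i$ is the new last destination'' and ``$d_m$ is the previous last destination'' are tallied across all permutations without overlap or omission, treating the empty-coalition case on its own. As a safeguard I would verify that summing these coefficients over all passengers recovers the full return cost $\delta(d_n, d_0)$; this efficiency check both confirms the decomposition and lends confidence that the individual multiplicities are correct.
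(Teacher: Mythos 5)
Your proposal is correct and takes essentially the same approach as the paper: the paper's proof sketch likewise reduces the routing game to the prioritized computation by turning the dummy destination $d_{n+1}$ into $d_0$ and adding the return-leg terms, with exactly your coefficients ($\beta_{i,n+1}^i = \frac{1}{n-i+1}$ and $\beta_{p,n+1}^i = -\frac{1}{(n-p)\cdot(n-p+1)}$ for $p<i$). Your explicit additive decomposition into $v+w$ and the permutation counts merely spell out details that the paper's sketch leaves implicit.
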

\begin{proof}[Proof (sketch)]
We use our previous definitions and results with the following slight modifications. The dummy destination $d_{n+1}$ becomes $d_0$. Thus, $\delta(d_i,d_{n+1})=\delta(d_i,d_0)$. In Observation~\ref{obs:shap_sums} we need to modify the bound in the outer sum (with the index $p$) to $n$ and the bound in the inner sum (with the index $q$) to $n+1$. In addition, we use the proof of  Theorem~\ref{thr:poly}, but we add $\sum\limits_{p=0}^{i} \beta_{p,n+1}^i\delta(d_p,d_{n+1})$ to the calculation of $\phi(u_i)$, where for $p<i$, $\beta_{p,n+1}^i = -\frac{1}{(n-p)\cdot (n-p+1)}$ and $\beta_{i,n+1}^i = \frac{1}{n-i+1}$.
\qed
\end{proof}
Note that this is an unexpected result, since it refutes the conjecture in~\cite{yengin2012characterizing} that there is no efficient way for computing the Shapley value in routing games.

\section{Non-prioritized Ride-sharing Problem}
Similar to the prioritized ride-sharing problem we are given an initial priority order, which determines the drop-off order of the passengers. However, in the non-prioritized variant we do not enforce the fixed order for every subset of passengers. Instead, given a strict subset of passengers $S$, the cost associated with it, $c(S)$, is the length of the shortest path that traverses all of the destinations of the passengers in $S$. 

\subsection{The Hardness of the Non-prioritized Ride-sharing Problem}
In Section~\ref{sec:shapley} we showed that we can efficiently compute the Shapley value for the prioritized ride-sharing problem. In essence, the computation could be done efficiently since most of the travel distances cancel out, and only a polynomial number of terms remain in the computation. Unfortunately, this is not the case with the non-prioritized ride-sharing problem, where the Shapley value cannot be computed efficiently unless $P=NP$. 

Clearly, finding the length of the shortest path (not necessarily a simple path) that starts at a specific node, $v_0$, and traverses all nodes in a graph (without returning to the origin) cannot be performed in polynomial time, unless $P=NP$. We denote this problem as \emph{path-TSP}.
%
We use the path-TSP to show that computing the Shapley value for the prioritized ride-sharing cannot be done efficiently, unless $P=NP$.
\begin{theorem}
There is no polynomial time algorithm that computes the Shapley value for a given passenger in the prioritized ride-sharing problem unless $P=NP$.
\end{theorem}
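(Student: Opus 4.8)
The statement as worded refers to the prioritized problem, but Theorem~\ref{thr:poly} already computes that Shapley value in polynomial time, so no NP-hardness can hold there unless $P=NP$; the result to be proved is therefore the one matching this section and the preceding appeal to path-TSP, namely hardness when each coalition's cost is the shortest path through its destinations. I would establish it by reducing \emph{path-TSP} to the task of computing a single passenger's Shapley value. Given an arbitrary path-TSP instance --- a graph with a distinguished start node $v_0$ --- I would embed it into a ride-sharing instance by setting the origin to $d_0=v_0$ and creating one passenger for each remaining node, so that the destination set $D$ is exactly the set of nodes to be traversed. The point of the embedding is that the cost of the grand coalition, $c(D)$, is by definition the length of the shortest non-returning path from $d_0$ through all of $D$, which is precisely the path-TSP optimum; denote it $L^\ast$.

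The bridge from one Shapley value to $L^\ast$ is the efficiency property of the Shapley value, which guarantees $\sum_{i=1}^{n}\phi(u_i)=c(D)=L^\ast$. I would argue by contradiction: suppose an algorithm $A$ returns $\phi(u_i)$, for a given instance and a given index $i$, in polynomial time. Then invoking $A$ once for each $i\in\{1,\dots,n\}$ still runs in polynomial time and produces all $n$ Shapley values, whose sum equals $c(D)=L^\ast$. Hence $A$ would compute the path-TSP optimum of the embedded graph in polynomial time, forcing $P=NP$. The passage from ``a given passenger'' to ``all passengers'' is legitimate precisely because $A$ is assumed to succeed for an arbitrary target index, so polynomially many calls remain polynomial.

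The step I expect to be the main obstacle is ensuring that the grand-coalition cost of the constructed instance genuinely equals the path-TSP optimum $L^\ast$ under the model's cost definition: the embedding only transmits hardness when the traversal order within a coalition is free, i.e.\ when $c(D)$ is the shortest-path cost rather than the fixed-order quantity $v(D)$ of Theorem~\ref{thr:poly}, which is polynomial-time computable and would yield no hardness at all. The construction must therefore be carried out in the shortest-path cost regime, and I would verify explicitly that the grand coalition is evaluated by a free-order shortest path through $D$. Once the identification $c(D)=L^\ast$ is secured, the efficiency identity does the rest, and the remaining details --- handling the dummy destination $d_{n+1}$ and noting that each pairwise distance $\delta(\cdot,\cdot)$ is a polynomial-time shortest path in $G$ --- are routine.
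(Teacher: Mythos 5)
You correctly spotted that the theorem's ``prioritized'' is a typo for ``non-prioritized,'' but your reduction rests on a reading of the cost model that the paper explicitly rules out, and this is fatal to the efficiency-based argument. In the paper's definition of the non-prioritized problem, only a \emph{strict} subset $S \subsetneq D$ is evaluated by the free-order shortest path through its destinations; the grand-coalition cost $c(D)$ is the travel cost of the \emph{given} drop-off order (the paper's proof even begins by recalling this). Hence your bridge identity $\sum_{i}\phi(u_i)=c(D)$ delivers a quantity computable in polynomial time by $n$ shortest-path queries along the fixed order --- not the path-TSP optimum $L^\ast$ --- and no hardness follows. You anticipated exactly this obstacle and proposed to ``verify explicitly that the grand coalition is evaluated by a free-order shortest path through $D$,'' but that verification cannot succeed: the model says the opposite, so your construction proves hardness only for a different cost function than the one the theorem concerns. (Under that alternative model, where every coalition including $D$ optimizes its route, your efficiency argument would be sound and quite clean; it just is not this paper's theorem.)

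The missing idea is a gadget that isolates a hard quantity inside a \emph{single} marginal contribution rather than in the efficiency sum. The paper adds to the path-TSP graph $G(V,E)$ one new node $v'$ joined to every $v\in V$ by an edge of weight $M$, where $M$ is the total edge weight of $G$, makes $v'$ the destination of a new passenger $u'$, and queries only $\phi(u')$. For every strict subset $S$ of the remaining destinations, both $c(S)$ and $c(S\cup\{v'\})$ are free-order shortest paths, and the large symmetric edges force $v'$ to be visited last, so the margin is exactly $M$; the single exception is the full complement, where the margin is $c(D)-c(D\setminus\{v'\})=c(D)-x$ with $x$ the path-TSP optimum on $G$. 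This yields $\phi(u')=\frac{1}{|U|}\bigl(c(D)-x\bigr)+\frac{|U|-1}{|U|}M$, and since $c(D)$ (the fixed-order cost) and $M$ are polynomial-time computable, one oracle call recovers $x$ by arithmetic. Note that the paper's gadget exploits the model's asymmetry --- optimized strict subsets versus a fixed-order grand coalition --- as a feature, whereas your argument needed that asymmetry to disappear.
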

\begin{proof}
Given an instance of the path-TSP problem on a graph $G(V,E)$ we denote the solution by $x$. We construct an instance of the non-prioritized ride-sharing problem as follows. We build a graph $G'(V',E')$, where we add a node $v'$, i.e., $V' = V \cup \{v'\}$. If $e \in E$ then $e \in E'$, and for all $v \in V$, $(v,v') \in E'$ with a weight of $M$, where $M$ is the sum of weights of all the edges in $E$. Finally, we set $|U|=|V|$, $D=V' \setminus \{v_0\}$, $d_0 = v_0$, and the drop-off order is arbitrarily chosen. Recall that $c(D)$ is the total travel cost associated with the chosen drop-off order. We ask to compute the Shapley value of the user $u'$ that is associated with the destination $v'$.

Clearly, the marginal contribution of $u'$ to any strict subset of $V$ is exactly $M$. However, the marginal contribution of $u'$ to the complete set $V$ is exactly $c(D)$ minus $x$ (the length of the shortest path starting at $v_0$ and traversing all nodes in $V$). That is,
\[
\phi(u') = \frac{(|U|-1)!}{|U|!}(c(D)-x) + \frac{|U|!-(|U|-1)!}{|U|!}M
\]
After some simple mathematical manipulations we get that $x = |U|\phi(u')-(|U|-1)M+c(D)$. Therefore, if we can compute $\phi(u')$ in polynomial time then we can solve the path-TSP problem in polynomial time, which is not possible unless $P=NP$.
\qed
\end{proof}

\begin{figure*}[hbpt]
\centering
\includegraphics[width=8cm]{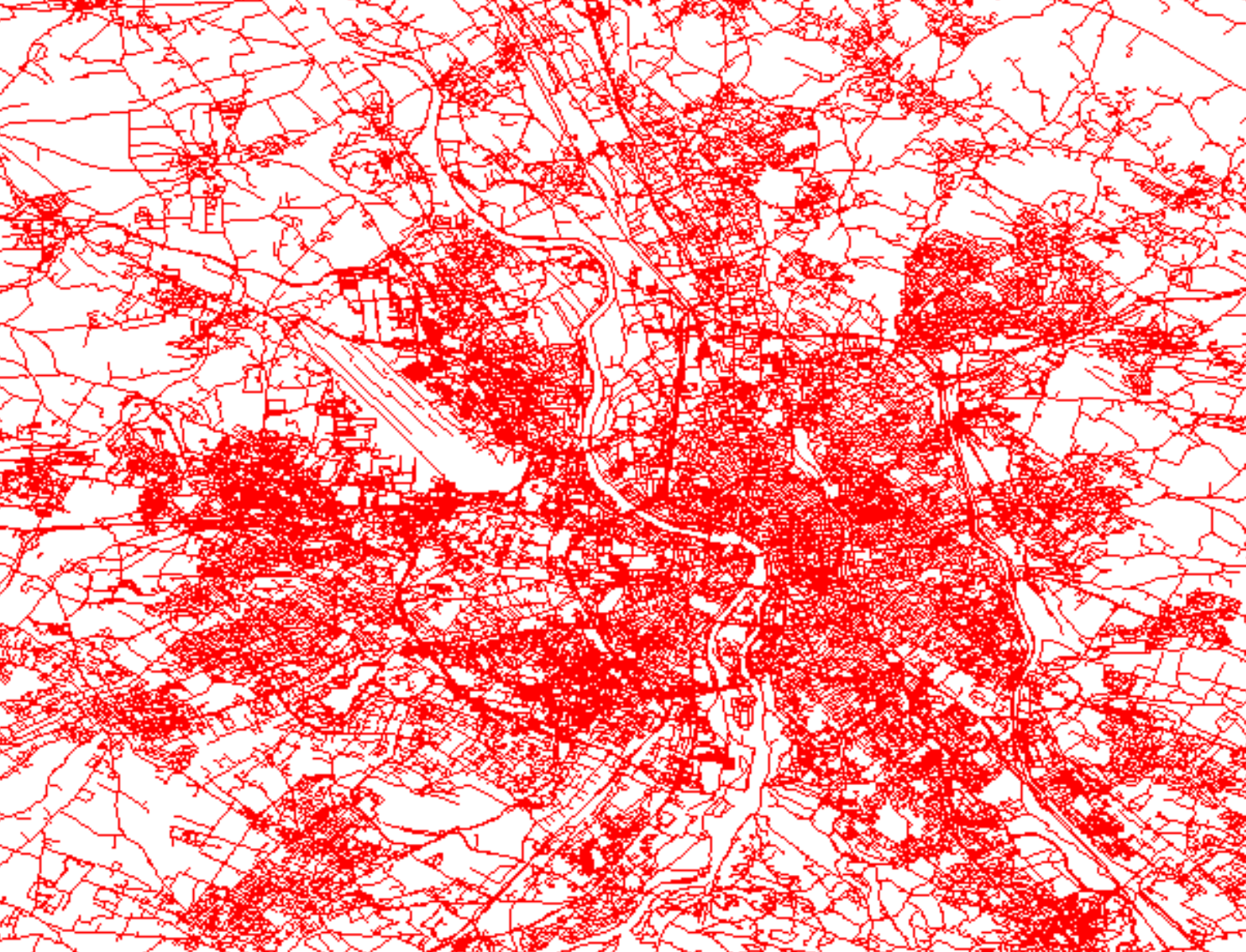} \caption{A graph created from a map of the city of Toulouse, France.}
\label{fig:ToulouseGraph}
\end{figure*}

\subsection{Shapley Approximation based on a Prioritized Order}

In Section \ref{sec:shapley} we presented a method for efficiently computing the Shapley value when a prioritization exists. In this section we show that our solution may be also applicable to the non-prioritized ride-sharing problem as an efficient proxy for the Shapley value. We term our proxy SHAPO: SHapley Approximation based on a Prioritized Order.

We compare SHAPO with the following three proxies for computing the Shapley value in traveling salesman games, that are in use in real-world applications~\cite{aziz2016study}.

\paragraph{Depot Distance}
This method divides the total ride cost proportionally to the distance from the depot, i.e. $Depot(u_i)=\frac{\delta(d_0,d_i)}{\sum_{j=1}^{n}{\delta(d_0,d_j)}}c(D)$. For example, a passenger traveling to a destination that is twice as distant from the origin as another passenger has to pay twice the cost, regardless of the actual travel path.
We note that this method has outperformed all other methods in ~\cite{aziz2016study} on real data. 

\paragraph{Shortcut Distance}
This method divides the total cost proportional to the change realized by skipping a destination when traversing the given path. Formally, let $Cut_i =\delta(d_{i-1}, d_i) + \delta(d_i, d_{i+1}) - \delta(d_{i-1}, d_{i+1})$. Then, $Shortcut(u_i)=\frac{Cut_i}{\sum_{j=1}^{n}Cut_j}c(D)$. 

\paragraph{Re-routed Margin}
This method is a more sophisticated realization of the shortcut distance method. That is, instead of using the given path when skipping a destination, we compute the optimal path. Formally, $Reroute(u_i)=\frac{c(D)-c(D \setminus \{d_i\})}{\sum_{j=1}^{n}c(D)-c(D \setminus \{d_j\})}c(D)$.
Note that when evaluating this proxy we need to solve $n$ TSPs, one for leaving out each destination. This is the only proxy we consider that requires a non-negligible time to compute.


\subsubsection{Experimental Settings}
In order to evaluate the performance of SHAPO, we evaluated each of the methods for $3,4,5,6,7,8$ and $9$ passengers. For the road network we used the graph of the city of Toulouse, France\footnote{obtained from \url{https://www.geofabrik.de/data/shapefiles_toulouse.zip}} as presented in Figure \ref{fig:ToulouseGraph}. This graph includes the actual distances between the different vertices. To convert the distances to travel costs we assumed a cost of $\$1$ per kilometer. The graph also includes the Toulouse-Blagnac airport, which was set as the origin ($d_0$).
We cropped the graph to $40,000$ vertices, by running Dijkstra algorithm~\cite{Dijkstra1959} starting at the airport, sorting all vertices by their distance from the airport, and removing all farther away vertices (including those that are unreachable). The destination vertices were randomly sampled for every passenger using a uniform distribution over all vertices, and each of the methods was evaluated $100$ times against the true Shapley value of all passengers. 

For running the simulations we assume that the given order of the passengers is according to the shortest path. This is a reasonable assumption, since if there is no prioritization, it is very likely that, in order to reduce the overall cost, the vehicle would travel using the shortest path (computed once). We conjecture that the results presented in this paper will carry-out also to situations in which the given passenger order is very close to being optimal (but not necessarily the exact optimal order), but we leave it for future investigation.

\subsubsection{Results}
Figure~\ref{fig:RunTimeGraph} presents the running time, in seconds, required to compute the Shapley value and its proxies for all passengers on a single instance (in logarithmic scale). As expected, we can compute the proxies, except for the Re-routed margin proxy, almost instantaneously. 
However, due to the extensive time required to compute the Shapley value, and since we evaluate each method $100$ times, we only evaluate the performance of all methods with up-to 9 passengers.

\begin{figure*}[hbpt]
\centering
\includegraphics[width=8cm]{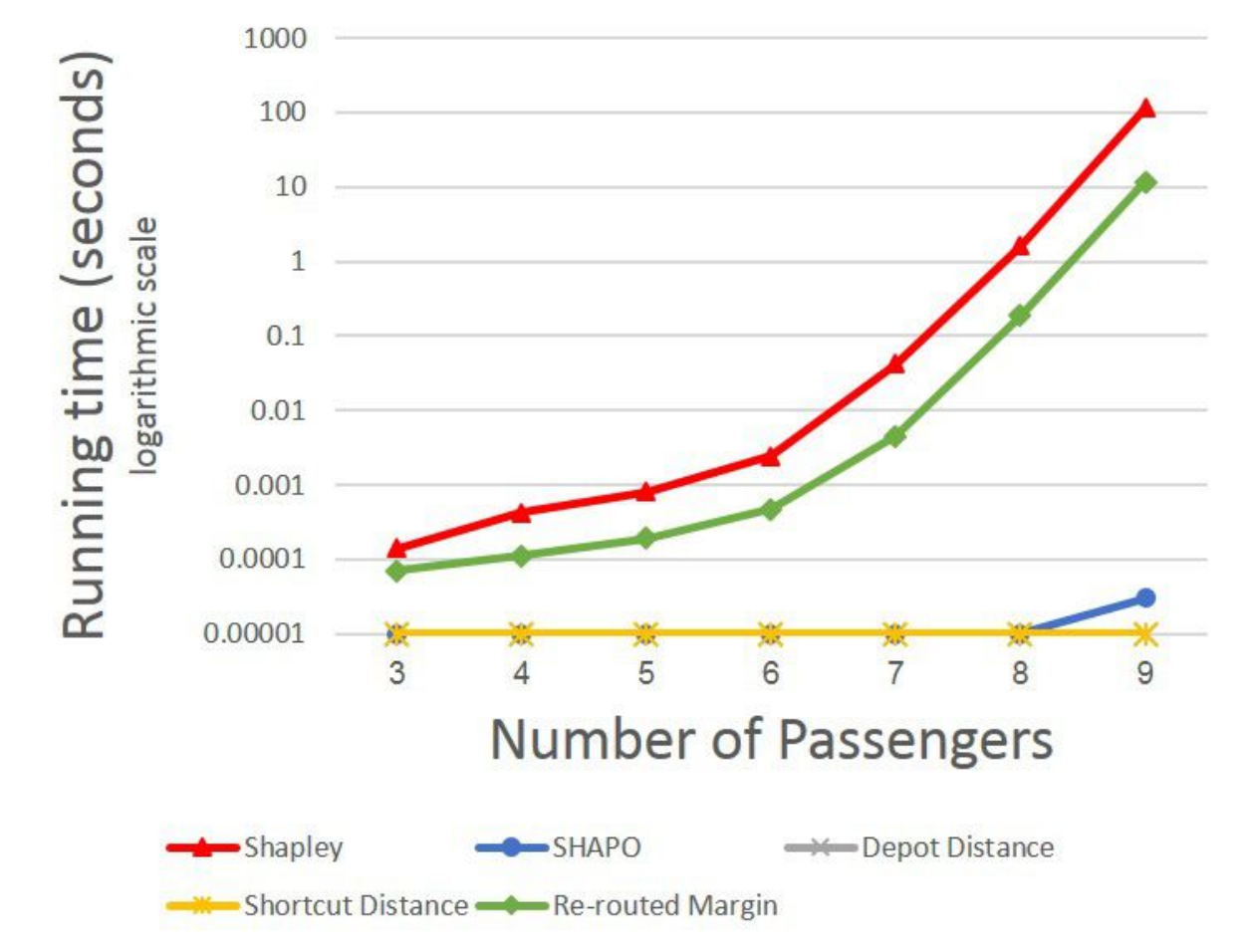} 
\caption{Running time, in seconds, required to compute a single instance of the Shapley value (in logarithmic-scale).}
\label{fig:RunTimeGraph}
\end{figure*}

We evaluate the performance of SHAPO against the three other proxies using $5$ different statistical measures (averaged on all $100$ iterations). We use $X(u_i)$ to denote the estimated Shapley value by the evaluated proxy.
\begin{enumerate}
    \item \textbf{Percent}: The average percentage of the deviation from the Shapley value. Formally, $Percent = \frac{1}{n}\sum_{i=1}^{n}\frac{|X(u_i) - \phi(u_i)|}{\phi(u_i)}$. 
    \item \textbf{MAE}: The mean absolute error, $MAE = \frac{1}{n}\sum_{i=1}^{n}|X(u_i) - \phi(u_i)|$. 
    \item \textbf{MSE}: The mean squared error, 
    $MSE = \frac{1}{n}\sum_{i=1}^{n}(X(u_i) - \phi(u_i))^2$. This measure gives higher weight to larger deviations.
    \item \textbf{RMSE}: The root mean squared error,  
    $RMSE = \sqrt{\frac{1}{n}\sum_{i=1}^{n}(X(u_i) - \phi(u_i))^2}$.
    \item \textbf{Max-Error}: The maximum deviation among all passengers between the real and estimated Shapley value,
    $Max = \max_{i=1}^{n}(|X(u_i) - \phi(u_i)|)$.
\end{enumerate}

\bgroup
\setlength{\tabcolsep}{0.7em}
\begin{table}
\centering
\begin{tabular}{ c|c c c c c c c c } 
\hline
 & 3 & 4 & 5 & 6 & 7 & 8 & 9 & AVG\\
\hline
SHAPO & \textbf{1.84\%} & \textbf{3.44\%} & \textbf{4.49\%} & \textbf{4.68\%} & \textbf{5.27\%} & \textbf{6.53\%} & \textbf{5.94\%} & \textbf{4.60\%}\\
Depot Distance & 20.02\% & 26.36\% & 29.52\% & 35.28\% & 35.78\% & 35.92\% & 35.91\% & 31.26\%\\
Shortcut Distance & 34.79\% & 41.55\% & 45.44\% & 53.70\% & 54.23\% & 53.35\% & 55.89\% & 48.42\%\\
Re-routed Margin & 84.59\% & 76.13\% & 79.65\% & 119.90\% & 80.57\% & 85.97\% & 72.83\% & 85.66\%\\
\hline
\end{tabular}
\caption{Average percentage of the deviation from the Shapley value (Percent). Averaged over $100$ iterations. Lower is better.}
\label{tab:percentTable}
\end{table}
\egroup

\bgroup
\setlength{\tabcolsep}{1em}
\begin{table}
\centering
\begin{tabular}{ c|c c c c c c c c } 
\hline
 & 3 & 4 & 5 & 6 & 7 & 8 & 9 & AVG\\
\hline
SHAPO & \textbf{\$0.11} & \textbf{\$0.18} & \textbf{\$0.20} & \textbf{\$0.20} & \textbf{\$0.20} & \textbf{\$0.22} & \textbf{\$0.21} & \textbf{\$0.19}\\
Depot Distance & \$1.11 & \$1.29 & \$1.34 & \$1.53 & \$1.38 & \$1.35 & \$1.28 & \$1.33\\
Shortcut Distance & \$1.87 & \$2.19 & \$2.23 & \$2.47 & \$2.27 & \$2.07 & \$2.09 & \$2.17\\
Re-routed Margin & \$3.21 & \$2.81 & \$2.45 & \$2.43 & \$2.10 & \$2.03 & \$1.85 & \$2.41\\
 \hline
\end{tabular}
\caption{The mean absolute error of the deviation from the Shapley value (MAE). Averaged over $100$ iterations. Lower is better.}
\label{tab:ABSTable}
\end{table}

\begin{table}
\centering
\begin{tabular}{ c|c c c c c c c c } 
\hline
 & 3 & 4 & 5 & 6 & 7 & 8 & 9 & AVG\\
 \hline
SHAPO & \textbf{0.068} & \textbf{0.115} & \textbf{0.124} & \textbf{0.098} & \textbf{0.102} & \textbf{0.159} & \textbf{0.117} & \textbf{0.112}\\
Depot Distance & 1.988 & 2.634 & 3.054 & 4.130 & 3.327 & 3.366 & 3.066 & 3.081\\
Shortcut Distance & 5.805 & 8.431 & 8.937 & 10.840 & 9.749 & 8.097 & 7.731 & 8.513\\
Re-routed Margin & 15.951 & 12.404 & 9.587 & 10.174 & 7.399 & 7.168 & 5.986 & 9.810\\
 \hline
\end{tabular}
\caption{The mean squared error of the deviation from the Shapley value (MSE). Averaged over $100$ iterations. Lower is better.}
\label{tab:PowTable}
\end{table}

\begin{table}
\centering
\begin{tabular}{ c|c c c c c c c c } 
\hline
 & 3 & 4 & 5 & 6 & 7 & 8 & 9 & AVG\\
 \hline
SHAPO & \textbf{0.121} & \textbf{0.204} & \textbf{0.239} & \textbf{0.246} & \textbf{0.259} & \textbf{0.299} & \textbf{0.277} & \textbf{0.235}\\
Depot Distance & 1.205 & 1.474 & 1.594 & 1.859 & 1.699 & 1.703 & 1.617 & 1.593\\
Shortcut Distance & 2.112 & 2.556 & 2.698 & 3.068 & 2.904 & 2.675 & 2.662 & 2.668\\
Re-routed Margin & 3.540 & 3.200 & 2.866 & 2.951 & 2.517 & 2.508 & 2.292 & 2.839\\
 \hline
\end{tabular}
\caption{The root mean squared error of the deviation from the Shapley value (RMSE). Averaged over $100$ iterations. Lower is better.}
\label{tab:RMSETable}
\end{table}

\begin{table}
\centering
\begin{tabular}{ c|c c c c c c c c } 
\hline
 & 3 & 4 & 5 & 6 & 7 & 8 & 9 & AVG\\
 \hline
SHAPO & \textbf{\$0.17} & \textbf{\$0.30} & \textbf{\$0.39} & \textbf{\$0.44} & \textbf{\$0.49} & \textbf{\$0.62} & \textbf{\$0.58} & \textbf{\$0.43}\\
Depot Distance & \$1.66 & \$2.24 & \$2.73 & \$3.44 & \$3.32 & \$3.51 & \$3.47 & \$2.91\\
Shortcut Distance & \$2.81 & \$3.92 & \$4.69 & \$5.66 & \$5.77 & \$5.49 & \$5.56 & \$4.84\\
Re-routed Margin & \$4.81 & \$4.74 & \$4.60 & \$5.32 & \$4.67 & \$4.95 & \$4.71 & \$4.83\\
 \hline
\end{tabular}
\caption{The maximum deviation among all passengers between the real and estimated Shapley value (Max-Error). Averaged over $100$ iterations. Lower is better.}
\label{tab:MAXTable}
\end{table}
\egroup

The results are depicted in Tables \ref{tab:percentTable}, \ref{tab:ABSTable}, \ref{tab:PowTable}, \ref{tab:RMSETable} and \ref{tab:MAXTable}. 
SHAPO significantly outperforms the other proxies in all measures, with any number of passengers evaluated. Despite the depot distance method outperforming the other two methods, SHAPO is between $5.5$ to $42.3$ times better than the depot distance in all measures. Note that the units of MAE and Max-Error are dollars. That is, as depicted in Table~\ref{tab:ABSTable}, SHAPO deviated by only $19$ cents, on average, from the actual Shapley value. The depot distance deviated by $\$1.33$, while the averaged shared-ride cost per passenger was approximately $\$5$. Similarly, the maximal deviation of SHAPO was less than $44$ cents (on average), while the maximal deviation of the depot distance was more than $\$2.9$.

\section{Conclusions}
The Shapley value is considered one of the most important division scheme of revenues or costs, but its direct computation is often not practical for a reasonable size game. Therefore, Mann and Shapely~\cite{mann1962values} suggest to consider restrictions and constraints in order to find games where the Shapley value can be efficiently computed. The airport problem~\cite{littlechild1973simple} is one example of these games, where the Shapley value can be efficiently computed. Our prioritized ride-sharing problem is a generalization of the airport problem to the 2D plane, and we showed that the Shapley can be efficiently computed in this generalization as well. However, we show that the non-prioritized ride-sharing problem, which is possibly the next level of generalization, cannot be efficiently computed (unless $P=NP$). Interestingly, the prioritized  ride-sharing can still serve as an efficient proxy for the Shapley value of the non-prioritized ride-sharing problem where the provided travel path is the shortest path.

There are several interesting directions for future work. One possible  direction is to compare our proxy for computing the Shapley value in the non-prioritized ride-sharing problem to a sampling based approach~\cite{castro2009polynomial}. It is expected that a sampling based approach will be more accurate if there is a sufficient number of samples, but it will certainly require a lot more computation time. It is thus interesting to analyze when our proxy is still better than a sampling-based approach, and when it is the point in which a sampling-based approach becomes better than our proxy.
From a theoretical perspective, we showed that computing the Shapley value for the non-prioritized ride-sharing problem is a hard problem. However, the hardness may be derived also from the hardness of path-TSP. There are several polynomial time approximation and heuristics for TSP that can be adjusted for path-TSP. It is thus interesting to analyze the computational complexity of finding the Shapley value, where $c(S)$ is computed using one of these approximations or heuristics.


\bibliographystyle{splncs03}
\bibliography{shapley}
\end{document}